\newtheorem{theorem}{Theorem}
\newtheorem{prop}[theorem]{Proposition}
\newtheorem{corollary}[theorem]{Corollary}
\newtheorem{definition}{Definition}
\newtheorem{example}{Example}
\def\ds{\displaystyle}
\def\lt{\left}
\def\rt{\right}
\def\C{\mathbb{C}} 
\def\Z{\mathbb{Z}} 
\def\F{\mathbb{F}} 
\def\p{\mathcal{P}} 
\def\c{\mathcal{C}} 
\def\dc{\mathcal{C}^{\perp}} 
\def\as{\Gamma} 
\DeclareMathOperator{\supp}{supp}
\DeclareMathOperator{\wt}{wt}
\begin{document}

\title{An extension of Massey scheme for secret sharing}

\author{\IEEEauthorblockN{Romar dela Cruz\IEEEauthorrefmark{1}\IEEEauthorrefmark{2}, 
     Annika Meyer\IEEEauthorrefmark{3},
     and Patrick Sol\'e\IEEEauthorrefmark{4}}
   \IEEEauthorblockA{\IEEEauthorrefmark{1}%
     Division of Mathematical Sciences, SPMS, Nanyang Technological University, Singapore\\
     Email: roma0001@ntu.edu.sg}
   \IEEEauthorblockA{\IEEEauthorrefmark{2}%
     EDSTIC, Universit\'{e} de Nice-Sophia Antipolis, Les Algorithmes, Euclide B 06903 Sophia Antipolis, France}
   \IEEEauthorblockA{\IEEEauthorrefmark{3}%
     Lehrstuhl D f\"{u}r Mathematik, RWTH Aachen University, Templergraben 64, 52062 Aachen, Germany\\
     Email: annika.meyer@math.rwth-aachen.de}
   \IEEEauthorblockA{\IEEEauthorrefmark{4}%
     CNRS, Telecom-ParisTech, Dept Comelec, 46 rue Barrault 75013 Paris, France\\
     Email: sole@enst.fr}}

\maketitle

\begin{abstract}
We consider an extension of Massey's construction of secret sharing schemes using linear codes.  We describe the access structure of the scheme and show its connection to the dual code.  We use the $g$-fold joint weight enumerator and invariant theory to study the access structure.
\end{abstract}


\bigskip
\section{Introduction}
A \emph{secret sharing scheme} is a process of distributing a secret to a set of participants in such a way that only certain subsets of them can determine the secret.  The set of all subsets which can determine the secret is called the access structure of the scheme.  Secret sharing schemes were introduced in 1979 (\cite{B79}, \cite{S79}) and since then, different schemes were constructed.  For a general introduction to secret sharing schemes, see for instance \cite{St92}.  An important class of secret sharing schemes are those which are based on linear codes.  The relation between secret sharing schemes and linear codes was first presented in \cite{MS81}.  The access structure of schemes based on self-dual codes was analyzed in \cite{DMS08} using some properties of the codes.   

In this work, we consider an extension of the construction method in \cite{M95}. This construction is presented in Section 2.  In Section 3, we characterize the groups that can determine the secret.  In Sections 4-6, we describe the access structure of the scheme by extending the techniques used in \cite{DMS08}. 

\section{Codes and Secret Sharing Schemes}
Let $\F_q$ stand for the finite field of order $q$, where $q$ is a prime power.  The \emph{Hamming weight}
$\wt(\vec{v})$ of a vector $\vec{v}$ in $\F_q^n$ is the number of its non-zero coordinates while the support of $\vec{v}$ is given by supp$(\vec{v})=\{i : v_i\neq 0, 1\leq i\leq n\}$.  An $[n,k,d]$ \emph{linear code} $\c$ is a linear subspace of $\F_q^n$ where $k$ is the dimension and $d$ is the minimum Hamming weight.  A \emph{generator matrix} $G$ for a code $\c$ is a matrix whose rows form a basis for $\c$.  For any linear code $\c$, we denote by $\dc$ its dual under the usual inner product.  A code $\c$ is said to be \emph{self-orthogonal} if $\c\subseteq \dc$ and it is \emph{self-dual} if $\c=\dc$.

We consider the following secret sharing scheme.  Let $\p=\{P_1,\ldots,P_n\}$ be the set of participants.  Suppose we want to share the secret $\vec{s}=(s_1, s_2, \ldots, s_l)\in\F_q^l$.  Let $\c$ be an $[l+n,k,d]$ linear code over $\F_q$ with $d>l$. Consider a generator matrix $G=[G_1,\ldots,G_l,G_{l+1},\ldots,G_{l+n}]$ of $\c$ where $G_i$ is the $i$th column.  To generate the shares, the dealer picks a vector $\vec{u}$ such that $\vec{u}G_i=s_i$ for $1\leq i\leq l$.  A codeword $\vec{c}=\vec{u}G$ is then computed.  Now the share of $P_i$ is $c_{l+i}$ for $i=1,\ldots,l$.  Note that when $l=1$ then we have Massey's construction \cite{M95}.  We also remark that this construction was mentioned in \cite{MS81} in the case of Reed-Solomon codes. 

Let $B=\{P_{i_1},\ldots,P_{i_m}\}\subseteq \p$.  We have the following result from \cite{BK93}.  The participants in $B$ can recover the secret if span$(G_1,\ldots,G_l)\subseteq$ span$(G_{i_1},\ldots,G_{i_m})$.  The participants in $B$ have no information on the secret if span$(G_1,\ldots,G_l)\cap$ span$(G_{i_1},\ldots,G_{i_m})=\vec{0}$.  Otherwise, the participants in $B$ have partial information on the secret.

The \emph{access structure} $\as$ of the scheme is the collection of all subsets of $\p$ that can recover the secret.  An element $B\in\as$ is called a \emph{minimal access group} if no element of $\as$ is a proper subset of $B$.  For $l=1$, it was shown in \cite{M95} that there is a one-to-one correspondence between the set of minimal access groups and the set of minimal codewords of $\dc$ with first coordinate equal to 1.

A scheme is said to be \emph{perfect} if every group in the access structure can determine the secret and every group not in the access structure has no information about the secret.  If a scheme is not perfect then some groups have partial information on the secret.  The scheme that we consider here is non-perfect for $l\geq 2$.  

The \emph{information rate} of a scheme is the ratio of the size of the secret and maximum size of the share.  For perfect schemes, the size of each share must be at least as large as the size of the secret.  An advantage of non-perfect schemes is that the size of each share can be smaller than the size of the secret.  The information rate of the scheme above is $l$.  

\section{Access Structure}
We now describe the access structure of a scheme based on a linear code $\c$.
In \cite{CCGHV07}, it was shown that any group of size at most $d^{\bot}-l-1$ has no information about the secret and any group of size at least $n+l-d+1$ can recover the secret.
Here we show that no group of size at most $d_l^{\perp}-l-1$ is in the access structure, where $d_l^{\perp}$
is the $l$th generalized Hamming weight of $\c$ (cf. Corollary \ref{lth_hamming_weight}).
Since $d_l^{\perp}$ is not so easy to determine for $l \ge 2$, we also show that the size of an access group is at least
$\frac{3}{2}(d^{\perp}-l)$, where $d^{\perp}$ is the minimum weight of $\c^{\perp}$ (cf. Corollary \ref{A_n_k}).
This bound is weaker than the one given by $d_l^{\perp}$, but easier to calculate.
We are going to use the following proposition which is an extension of the approach in \cite{M95}.

\begin{prop}\label{main_prop}
Let $B=\{P_{i_1},\ldots,P_{i_m}\}\subseteq \p$.  Then the participants in $B$ can determine $\vec{s}$ if and only if there exist codewords $\vec{v}_j\in\dc$, $1\leq j\leq l$, satisfying the following conditions:
\begin{itemize}
\item[i.] The subvector of $\vec{v}_j$ consisting of its first $l$ coordinates is equal to the $j$th unit vector $\vec{e}_j$ in $\F^l_q$.
\item[ii.] supp$(\vec{v}_j)\subseteq \{j,i_1,\ldots,i_m\}.$
\end{itemize}
\end{prop}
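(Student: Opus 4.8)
The plan is to set up a dictionary between the codewords of $\dc$ and the linear dependencies among the columns of $G$, and then to read the recovery criterion recalled from \cite{BK93} through this dictionary. Since the rows of $G$ span $\c$, a vector $\vec{w}\in\F_q^{l+n}$ lies in $\dc$ if and only if $G\vec{w}^{\,T}=\vec{0}$, that is, $\sum_s w_s G_s=\vec{0}$. Hence a codeword of $\dc$ is exactly a linear relation among the columns of $G$: its support records which columns occur and its entries give the coefficients. Throughout I write $T=\{l+i_1,\ldots,l+i_m\}$ for the coordinates holding the shares of $B$, i.e.\ the positions labelled $i_1,\ldots,i_m$ in (ii).

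By the criterion from \cite{BK93}, the participants in $B$ determine $\vec{s}$ exactly when $\mathrm{span}(G_1,\ldots,G_l)$ is contained in the span of the columns attached to $B$, that is, when $G_j\in\mathrm{span}(G_t:t\in T)$ for every $1\le j\le l$. So it suffices to show, for each fixed $j$, that $G_j$ lies in this span if and only if there is a codeword $\vec{v}_j\in\dc$ satisfying (i) and (ii).

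For the forward implication I would begin with such a $\vec{v}_j$ and apply the dictionary to $\sum_s (v_j)_s G_s=\vec{0}$: condition (i) makes the first $l$ columns contribute only the single term $G_j$ with coefficient $1$, while condition (ii) confines every remaining nonzero term to the columns indexed by $T$; rearranging gives $G_j=-\sum_{t\in T}(v_j)_t G_t$, as required. Evaluating instead the orthogonality relation $\vec{v}_j\cdot\vec{c}=0$ on the shared codeword $\vec{c}$ produces the companion reconstruction formula $s_j=-\sum_{t\in T}(v_j)_t c_t$, which recovers $s_j$ from the shares held by $B$ in the spirit of \cite{M95}. For the converse I would take a relation $G_j=\sum_{t\in T}\lambda_t G_t$ and set $\vec{v}_j:=\vec{e}_j-\sum_{t\in T}\lambda_t\vec{e}_t\in\F_q^{l+n}$; the dictionary gives $\vec{v}_j\in\dc$, its first $l$ coordinates equal $\vec{e}_j$ because $T$ avoids the first $l$ positions, and $\supp(\vec{v}_j)\subseteq\{j\}\cup T$, so (i) and (ii) hold. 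Running this for each $j$ yields the equivalence.

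The individual steps are short once the dictionary is in place; the point that needs care is the index bookkeeping — matching each $P_{i_k}\in B$ to its actual code coordinate $l+i_k$ so that (ii) constrains the correct columns, and verifying that (i) really encodes coefficient $1$ on $G_j$ and $0$ on the other secret columns. The single conceptual ingredient is the reduction to column spans; if one prefers not to invoke \cite{BK93}, the same reduction can be argued directly, since ``$B$ determines $\vec{s}$'' means that the projection of $\c$ onto the coordinates in $T$ determines its projection onto the first $l$ coordinates, and by linearity of $\c$ this is equivalent to the containment $\mathrm{span}(G_1,\ldots,G_l)\subseteq\mathrm{span}(G_t:t\in T)$.
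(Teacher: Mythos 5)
Your proof is correct, but it takes a genuinely different route from the paper's. The paper argues directly at the level of codewords: for the forward direction it pairs the shared codeword $\vec{c}$ with each $\vec{v}_j$ and reads the reconstruction $s_j = c_j = -\sum_r \alpha_{jr}c_{i_r}$ off the orthogonality relation; for the converse it asserts that if $B$ can determine the secret then there are equations $c_j = \sum_r \beta_{jr}c_{i_r}$ valid for every codeword, and rearranges each into a dual codeword satisfying (i) and (ii). You instead route everything through the generator-matrix picture: codewords of $\dc$ are exactly the linear relations among the columns of $G$, recovery by $B$ is equivalent to $\mathrm{span}(G_1,\ldots,G_l)\subseteq\mathrm{span}(G_t : t\in T)$ by the criterion of \cite{BK93} (which the paper recalls in Section 2 but never uses in this proof), and the proposition reduces to the observation that $G_j$ lies in that span iff a column relation of the prescribed support shape exists. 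The two proofs trade the same coefficients ($\beta_{jr}$ versus your $\lambda_t$), but your scaffolding buys something concrete: the paper's converse silently assumes that ``can determine the secret'' yields a linear, codeword-independent reconstruction equation, whereas your reduction to span containment --- in particular your closing projection/biduality argument, which derives the containment from the bare statement that the shares on $T$ determine the first $l$ coordinates of every codeword --- actually justifies that step. One small caution: the paper states the \cite{BK93} result only as a trichotomy (containment implies recovery, trivial intersection implies no information, otherwise partial information), so your ``exactly when'' reading needs the mutual exclusivity of the three cases; your own projection argument covers this if you prefer not to lean on it. Nothing is lost relative to the paper, since your remark $s_j = -\sum_{t\in T}(v_j)_t c_t$ reproduces the explicit reconstruction, and your bookkeeping of the index shift from participant $P_{i_k}$ to coordinate $l+i_k$ is more careful than the paper's own statement.
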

\begin{proof}
Suppose there exist codewords $\vec{v}_j\in\dc$, $1\leq j\leq l$, satisfying conditions (i) and (ii).  For $j=1,\ldots,l$, we have $$\vec{s}\cdot \vec{v}_j = c_j + \ds\sum_{r=1}^m\alpha_{jr}c_{i_r}=0$$ for some constants $\alpha_{jr}, 1\leq r\leq m$, which are not all zero.  Hence, the secret $\vec{s}$ can be determined as a linear combination of the shares of participants in $B$.

Suppose the participants in $B$ can determine the secret.  Then for each $j=1,\ldots,l$, we have an equation of the form $$c_j = \ds\sum_{r=1}^m \beta_{jr} c_{i_r}$$ for some constants $\beta_{jr}, 1\leq r\leq m$, which are not all zero.  The equation can be rewritten as
$$(c_1, c_2, \ldots, c_l, c_{l+1},\ldots, c_{l+n})\cdot$$
$$(\vec{e}_j,0,\ldots,-\beta_{j1},\ldots, -\beta_{jm},0,\ldots,0)=0.$$ Now the codewords $(\vec{e}_j,0,\ldots,-\beta_{j1},\ldots, -\beta_{jm},0,\ldots,0)$ are in $\dc$ and satisfy conditions (i) and (ii).  
\end{proof}

\begin{example}
Let $\c_1$ be the $[8,3,4]$ linear code over $\F_3$ with generator matrix 
$$G = \lt[\begin{array}{cccccccc}
1 & 0 & 0 & 0 & 2 & 2 & 1 & 1\\
0 & 1 & 0 & 1 & 2 & 1 & 2 & 1\\
0 & 0 & 1 & 2 & 0 & 1 & 0 & 2
\end{array}\rt].$$
We consider the scheme based on the dual of $\c_1$ with $l=2$ (so we have 6 participants).  Applying the proposition, we can verify that the access structure consists of 4 groups of size 5 and 1 group of size 6.
\end{example}

\begin{example}
Consider the scheme based on the $[8,4,4]$ extended binary Hamming code with $l=3$.  In this case, we have a total of 5 participants.  There are 4 groups of size 4 and 1 group of size 5 in the access structure.
\end{example}

\begin{corollary}\label{lth_hamming_weight}
Any group of $d_l^{\bot}-l-1$ or less participants is not in the access structure where $d_l^{\bot}$ is the $l$th generalized Hamming weight of $\dc$.
\end{corollary}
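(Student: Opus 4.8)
The plan is to prove the contrapositive and route everything through Proposition \ref{main_prop}, converting the statement ``$B$ can determine $\vec{s}$'' into a statement about a low-weight $l$-dimensional subcode of $\dc$, which the generalized Hamming weight is designed to control. Recall that the $l$th generalized Hamming weight $d_l^{\perp}=d_l(\dc)$ is the minimum of $|\supp(D)|$ taken over all $l$-dimensional subcodes $D\subseteq\dc$, where $\supp(D)=\bigcup_{\vec{x}\in D}\supp(\vec{x})$. So I would assume, toward a contradiction, that some $B=\{P_{i_1},\ldots,P_{i_m}\}$ with $m\le d_l^{\perp}-l-1$ lies in $\as$, apply Proposition \ref{main_prop} to produce codewords $\vec{v}_1,\ldots,\vec{v}_l\in\dc$ satisfying conditions (i) and (ii), and exhibit from them a subcode whose support is too small to be allowed.

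The key steps are as follows. First I would observe that condition (i) forces $\vec{v}_1,\ldots,\vec{v}_l$ to be linearly independent: their restrictions to the first $l$ coordinates are the unit vectors $\vec{e}_1,\ldots,\vec{e}_l$, which are already independent, so any nontrivial dependence among the $\vec{v}_j$ would descend to one among the $\vec{e}_j$. Hence $D:=\mathrm{span}(\vec{v}_1,\ldots,\vec{v}_l)\subseteq\dc$ has dimension exactly $l$, which is precisely the dimension the index in $d_l^{\perp}$ refers to. Next I would bound $|\supp(D)|$: by condition (i) the only nonzero entry of $\vec{v}_j$ among the first $l$ coordinates is in position $j$, so the union of these supports over $j=1,\ldots,l$ is exactly $\{1,\ldots,l\}$, contributing $l$ positions; by condition (ii) every remaining nonzero coordinate of each $\vec{v}_j$ lies among the $m$ positions determined by $B$. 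Therefore $\supp(D)\subseteq\{1,\ldots,l\}\cup\{i_1,\ldots,i_m\}$ and $|\supp(D)|\le l+m$.

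Finally I would invoke the definition of the generalized Hamming weight: since $D$ is an $l$-dimensional subcode of $\dc$, we have $|\supp(D)|\ge d_l^{\perp}$. Combining the two estimates gives $d_l^{\perp}\le l+m$, that is $m\ge d_l^{\perp}-l$, which contradicts the assumption $m\le d_l^{\perp}-l-1$; hence no such $B$ can be in $\as$. The hard part is not any computation but the conceptual step of seeing that conditions (i) and (ii) of Proposition \ref{main_prop} encode exactly an $l$-dimensional subcode of small support: the linear independence coming from (i) is what pins the dimension to $l$ (so that $d_l^{\perp}$, rather than $d^{\perp}$, is the right invariant), and the disjoint accounting of the first $l$ coordinates against the $m$ participant coordinates is what yields the sharp count $l+m$. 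Once those two observations are in place, the generalized Hamming weight bound closes the argument immediately.
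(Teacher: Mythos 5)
Your proof is correct and takes essentially the same route as the paper's: both use Proposition \ref{main_prop} to convert an access group into an $l$-dimensional subcode of $\dc$ whose support lies inside $\{1,\dots,l\}\cup\{i_1,\dots,i_m\}$, and then invoke the definition of the $l$th generalized Hamming weight to conclude $m \ge d_l^{\perp}-l$. Your write-up is in fact slightly more complete than the paper's two-line argument, since you explicitly verify the linear independence of the $\vec{v}_j$ (pinning the dimension at exactly $l$) and argue directly for an arbitrary access group rather than passing through minimal access groups.
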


\begin{proof}  
The $l$th generalized Hamming weight of a linear code is the minimum support of its subcodes of dimension $l$.
A minimal access group $B=\{P_{i_1},\ldots,P_{i_m}\}$ corresponds to an $[l+n,l]$ subcode $\mathcal{D}$ of $\dc$ such that supp$(\mathcal{D})=\{1,\ldots,l,i_1,\ldots,i_m\}$.  Hence, $m\geq d_l^{\bot}-l$.
\end{proof}

\begin{corollary}\label{A_n_k}
 If $l \ge 2$ then any group of $\frac{3}{2}(d^{\perp}-l)-1$ or less participants is not in the access structure.
\end{corollary}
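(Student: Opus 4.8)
The plan is to combine Corollary \ref{lth_hamming_weight} with an averaging bound on generalized Hamming weights, so that the whole statement reduces to a single inequality relating $d_l^\perp$ to the minimum weight $d^\perp = d_1^\perp$ of $\dc$. By Corollary \ref{lth_hamming_weight} every access group has size $m \geq d_l^\perp - l$, so it suffices to prove $d_l^\perp - l \geq \frac{3}{2}(d^\perp - l)$, which rearranges to $d_l^\perp \geq \frac{3}{2}d^\perp - \frac{l}{2}$. Thus the problem becomes a lower bound for the $l$th generalized Hamming weight in terms of the first one.

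To bound $d_l^\perp$ I would argue at the level of the subcode. Let $\mathcal{D} \subseteq \dc$ be an $l$-dimensional subcode realizing $d_l^\perp$, so that $|\supp(\mathcal{D})| = d_l^\perp$. For each coordinate $p \in \supp(\mathcal{D})$ the evaluation map $\mathcal{D} \to \F_q$, $\vec{c} \mapsto c_p$, is a nonzero linear functional, hence surjective with kernel of size $q^{l-1}$; therefore exactly $q^l - q^{l-1} = q^{l-1}(q-1)$ codewords of $\mathcal{D}$ are nonzero at $p$. Double counting the pairs $(p,\vec{c})$ with $p \in \supp(\mathcal{D})$ and $c_p \neq 0$ gives the identity $\sum_{\vec{c} \in \mathcal{D}\setminus\{\vec{0}\}} \wt(\vec{c}) = q^{l-1}(q-1)\,d_l^\perp$. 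Since each of the $q^l-1$ nonzero codewords has weight at least $d^\perp$, this yields $q^{l-1}(q-1)\,d_l^\perp \geq (q^l-1)d^\perp$, i.e. $d_l^\perp \geq \frac{q^l-1}{q^{l-1}(q-1)}\,d^\perp$.

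The final step is to check that this averaging constant beats $\frac{3}{2}$. Over $\F_2$ the bound reads $d_l^\perp \geq (2-2^{1-l})d^\perp \geq \frac{3}{2}d^\perp$ for every $l \geq 2$, and then $m \geq d_l^\perp - l \geq \frac{3}{2}d^\perp - l \geq \frac{3}{2}(d^\perp - l)$, the last inequality asking only $\tfrac{1}{2}l \geq 0$. I expect the main obstacle to be precisely this constant for general $q$: the factor $\frac{q^l-1}{q^{l-1}(q-1)}$ decreases in $q$ (it equals $\frac{q+1}{q}$ when $l=2$) and falls below $\frac{3}{2}$ for $q \geq 3$, so the averaging argument alone does not deliver a uniform $\frac{3}{2}$. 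A more hands-on alternative that isolates the difficulty is to work with two of the codewords $\vec{v}_1,\vec{v}_2$ supplied by Proposition \ref{main_prop} together with $\vec{v}_1 - \vec{v}_2$; restricting to the last $n$ coordinates and applying inclusion--exclusion to their supports gives $2m \geq \wt(\vec{v}_1)+\wt(\vec{v}_2)+\wt(\vec{v}_1-\vec{v}_2) - 4 - x \geq 3d^\perp - 4 - x$, where $x$ counts the positions at which $\vec{v}_1$ and $\vec{v}_2$ are both nonzero but not proportional. The term $x$ vanishes over $\F_2$, immediately giving $m \geq \tfrac{3}{2}d^\perp - 2 \geq \tfrac{3}{2}(d^\perp - l)$ for $l \geq 2$; controlling $x$, or otherwise recovering the constant $\frac{3}{2}$, in the case $q \geq 3$ is the step I would expect to demand the most care.
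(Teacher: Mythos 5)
Your proposal is correct, but (as you yourself suspect) only over $\F_2$ --- and this is exactly the regime in which the paper's own proof operates, so you have lost nothing. The paper's route is different from both of yours: it punctures the $l$-dimensional subcode $\mathcal{D}$ spanned by the codewords of Proposition \ref{main_prop} down to an $[m,l]$ code of minimum weight at least $d^{\perp}-l$, observes that this code has $2^l>2$ words, and invokes the well-known Plotkin-type fact that $A(N,\delta)\le 2$ whenever $N\le\frac{3}{2}\delta-1$. Your second argument is essentially an inlined proof of that cited fact: summing $\wt(\vec{v}_1)+\wt(\vec{v}_2)+\wt(\vec{v}_1-\vec{v}_2)$ and noting that each coordinate contributes at most $2$ is precisely how one proves the binary bound $A(N,\delta)\le 2$ in this range, and by keeping the corrections $-1,-1,-2$ instead of uniformly using $d^{\perp}-l$ you even get the slightly sharper conclusion $m\ge\frac{3}{2}d^{\perp}-2$. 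Your first argument (Corollary \ref{lth_hamming_weight} combined with the averaging bound $d_l^{\perp}\ge\frac{q^l-1}{q^{l-1}(q-1)}\,d^{\perp}$) is a genuinely different route, and for $q=2$ it is arguably cleaner, since the single inequality $d_l^{\perp}\ge(2-2^{1-l})d^{\perp}\ge\frac{3}{2}d^{\perp}$ handles all $l\ge 2$ at once; what it buys is a bound intrinsic to generalized Hamming weights, whereas the paper's (and your second) argument works directly with two or three codewords.

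Your diagnosis of the $q\ge 3$ obstruction is not a gap in your proposal but a genuine flaw in the paper: the paper's proof silently assumes the code is binary (it speaks of a ``binary $[m,l]$ code'' even though the scheme is over $\F_q$, and the bound $A(N,\delta)\le 2$ for $N\le\frac{3}{2}\delta-1$ fails over larger alphabets, e.g.\ the ternary code $\{0^N,1^N,2^N\}$ has three words at pairwise distance $N$). Indeed the statement as written is false for $q=3$: let $\dc$ be the ternary simplex code of dimension $4$ (length $40$, every nonzero word of weight $27$), with coordinates ordered so that the first two columns of its generator matrix are unit columns; then $\c$ is the $[40,36,3]$ ternary Hamming code (so $d=3>l=2$ and the scheme is legitimate), and the first two rows $\vec{v}_1,\vec{v}_2$ of the generator matrix satisfy the conditions of Proposition \ref{main_prop} for the group of the $m=34$ participants indexed by $(\supp(\vec{v}_1)\cup\supp(\vec{v}_2))\setminus\{1,2\}$, yet $34\le\frac{3}{2}(27-2)-1=36.5$. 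So no amount of care could have closed your $q\ge 3$ case; the corollary needs a binary hypothesis, under which either of your two arguments is a complete proof.
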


\begin{proof}
As in the proof of Corollary \ref{lth_hamming_weight}, a minimal access group of size $m$ corresponds to an $[l+n,l]$ subcode $\mathcal{D}$ of $\dc$ whose support has size $l+m$. Moreover, deleting the first $l$ coordinates of $\mathcal{D}$ as well as those coordinates which are not in its support yields a binary $[m,l]$ code of minimum weight at least $d^{\perp}-l$.  Recall that $A(N,\delta)$ is the maximum size of a (not necessarily linear) code of length $N$ and minimum weight at least $\delta$.  The above yields $A(m,d^{\perp}-l) \ge 2^l >2$.  On the other hand, it is well-known that $A(N,\delta) \le 2$ whenever $N \le \frac{3}{2}\delta -1$.  This yields $m \ge \frac{3}{2}(d^{\perp}-l)$.
\end{proof}

\begin{prop}
When all participants come together and attempt to determine the secret, $\left\lfloor \frac{d-l}{2}\right\rfloor$ cheaters can be detected.
\end{prop}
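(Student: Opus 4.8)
The plan is to recast cheater detection as a decoding problem for the code $\c$. When the dealer runs the scheme, the honest data form a single codeword $\vec{c}=(s_1,\ldots,s_l,c_{l+1},\ldots,c_{l+n})\in\c$, whose last $n$ coordinates are exactly the shares. If a set of $t$ participants cheats, the combiner is handed a share vector that differs from $(c_{l+1},\ldots,c_{l+n})$ in precisely the $t$ positions controlled by the cheaters, while the $l$ secret positions are unknown to the combiner. So the object to be analysed is the word obtained from $\vec{c}$ by injecting $t$ errors confined to the share coordinates, together with $l$ erased coordinates in the secret positions.

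The key reduction is to remove the $l$ secret coordinates without losing too much minimum distance. Just as in the proof of Corollary~\ref{A_n_k}, deleting the first $l$ coordinates costs at most $l$ in weight, so the punctured code $\c^{*}$ of length $n$ has minimum distance at least $d-l$; moreover the hypothesis $d>l$ guarantees that no nonzero codeword is supported inside the first $l$ coordinates, so this puncturing is injective and the shares still determine the codeword. Cheating then becomes ordinary error injection into a codeword of $\c^{*}$, and I would invoke the standard fact that a code of minimum distance $\delta$ lets one detect every nonzero error pattern of weight less than $\delta$ (and, if one insists on locating and correcting the faults, every pattern of weight at most $\lfloor(\delta-1)/2\rfloor$).

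Applying this with $\delta\geq d-l$ yields the announced bound: since $\lfloor\frac{d-l}{2}\rfloor<d-l$, any cheating by at most $\lfloor\frac{d-l}{2}\rfloor$ participants turns the submitted share vector into a noncodeword of $\c^{*}$, so it cannot be mistaken for an honest reconstruction and is detected. The main obstacle, I expect, is the careful bookkeeping of the $l$ secret coordinates and the precise meaning of \emph{detection}: one must justify treating the erased secret positions as $l$ genuinely lost coordinates of $\c$ rather than as adversarially chosen values, and one must decide whether the claim is pure detection or detection together with correction, since this is exactly what separates the threshold $\lfloor\frac{d-l}{2}\rfloor$ from the alternatives $\lfloor\frac{d-l-1}{2}\rfloor$ and $d-l-1$. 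Pinning down this constant—and checking that the bound is nonvacuous, i.e.\ that $d-l$ is large enough for the code at hand—is the delicate point; the rest is the routine minimum-distance argument above.
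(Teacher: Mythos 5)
Your proof takes essentially the same route as the paper: the paper's entire proof is the single observation that deleting the first $l$ coordinates of $\c$ yields a code of minimum distance (at least) $d-l$, with the standard minimum-distance/error-detection argument left implicit. Your additional bookkeeping --- injectivity of the puncturing because $d>l$, and the remark that $\left\lfloor \frac{d-l}{2}\right\rfloor < d-l$ so pure detection already suffices for the stated threshold --- simply makes explicit what the paper leaves unstated, and your closing caveat about detection versus correction correctly identifies why the paper's constant is $\left\lfloor \frac{d-l}{2}\right\rfloor$ rather than $d-l-1$ or $\left\lfloor \frac{d-l-1}{2}\right\rfloor$.
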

\begin{proof}
Deleting the first $l$ coordinates of $\c$ results in a code with minimum distance $d-l$.
\end{proof}

\section{$g$-fold joint weight enumerator}
We describe the connection between the $g$-fold joint weight enumerator and the access structure.  The $g$-fold joint weight enumerator is a generalization of the joint weight enumerator (see \cite{DHO01}).

\begin{definition}
Let $A_1, A_2,\ldots,A_g$ be codes of length $n$ over $\F_q$.  The $g$-fold joint weight enumerator of $A_1, A_2,\ldots,A_g$ is defined as follows: 
\begin{align*}
&\mathcal{J}_{A_1, A_2,\ldots,A_g}(x_a;a\in\F_2^g)\\
&= \ds\sum_{\vec{c}_1\in A_1, \ldots, \vec{c}_g\in A_g} \ds\prod_{a\in\F_2^g} x_a^{n_a(\vec{c}_1,\ldots,\vec{c}_g)},
\end{align*}
where $\vec{c}_j=(c_{j1},\ldots,c_{jn})$, $n_a(\vec{c}_1,\ldots,\vec{c}_g) = |\{i|a=(\overline{c_{1i}},\ldots,\overline{c_{gi}})\}|$, and $\overline{c_{ji}}=1$ if $c_{ji}\neq 0$ and $\overline{c_{ji}}=0$ if $c_{ji}=0$.  Here $(x_a;a\in\F_2^g)$ is a $2^g$-tuple of variables with $\F_2^g$, that is, $(x_{00\ldots0}, x_{00\ldots1},\ldots, x_{11\ldots1})$. 
\end{definition}

First we consider the case $l=2$, i.e. the secret $\vec{s}=(s_1,s_2)$.  For simplicity, we use the corresponding decimal representation of the subscripts of the variables in the $g$-fold joint weight enumerator.  Let $T_1=\{1\}$ and $T_2=\{2\}$ with indicator vectors $1_{T_1}$ and $1_{T_2}$ respectively.  Consider the 4-fold joint weight enumerator $\mathcal{J}_{1_{T_1},1_{T_2},\dc,\dc}(x_a)$ where $a\in\F_2^4$.  We are interested in the coefficient $x_{10}x_5$. The coefficient is a polynomial in $x_0x_1x_2x_3$ and it gives information on the number and supports of pairs of codewords $\vec{u}, \vec{v}\in\dc$ whose first two coordinates are $(u_1,0)$ and $(0,v_2)$ respectively, where $u_1$ and $v_2$ are both non-zero.  In general, for secrets of length $l$ we use the $2l$-fold joint weight enumerator $\mathcal{J}_{1_{T_1},\ldots,1_{T_l},\dc,\ldots,\dc}(x_a;a\in\F_2^{2l})$ where $a\in\F_2^{2l}$. The following theorem
generalizes a result in \cite{DMS08} where Jacobi polynomials were used.

\begin{theorem}\label{main_theorem}
Let $X_1$ be the subset of $\F_2^{2l}$ consisting of all vectors whose first $l$ coordinates are zero and let
$X_2:=\{(\vec{e}_j,\vec{e}_j) \;|\; j \in \{1,\dots,l\} \}$, where $\vec{e}_j \in \F_2^l$ is the $j$th unit vector.
Then the coefficient of $ \prod_{a\in X_2} x_a$ in $\mathcal{J}_{1_{T_1},\ldots,1_{T_l},\dc,\ldots,\dc}(x_a;a\in\F_2^{2l})$
is a polynomial $p(x_a;a\in X_1)$. Identify $X_1$ with $\{0,\dots,2^l-1\}$ via the binary number representation and write
$$
  p=\sum_{ \mu \in \mathbb{N}_0^{2^l}}
    c_{\mu} \prod_{a \in X_1} x_a^{\mu_a}.
$$
Then the number $M_{\c}(m)$ of groups of size $m$ in the access structure of the scheme based on $\c$ satisfies
$$
  M_{\c}(m) \le \sum_{\mu} c_{\mu},
$$
where the sum is over all $\mu$ with $\sum_{i=1}^{2^l-1} \mu_i = m$
Moreover, if $m < \frac{3}{2}d^{\perp} -1$ then equality holds.
\end{theorem}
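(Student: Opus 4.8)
The plan is to prove the three assertions in turn: that the indicated coefficient is a polynomial in the $X_1$-variables, that summing its coefficients over a fixed support size bounds $M_{\c}(m)$ from above, and that this bound is tight for small $m$.

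First I would analyze which monomials can occur. Since the first $l$ entries of the joint enumerator are the fixed indicator vectors $1_{T_1},\dots,1_{T_l}$, the first $l$ bits of the exponent pattern at coordinate $i$ are completely determined: they equal $\vec{e}_i\in\F_2^l$ for $1\le i\le l$ and $\vec{0}$ for $i>l$. Hence every coordinate $i>l$ contributes a variable $x_a$ with $a\in X_1$, while coordinate $i\le l$ contributes some $x_{(\vec{e}_i,w_i)}$. Extracting the coefficient of $\prod_{a\in X_2}x_a=\prod_{j=1}^l x_{(\vec{e}_j,\vec{e}_j)}$ forces $w_j=\vec{e}_j$ at each coordinate $j\le l$; unwinding the definition of $w_j$, this says exactly that the tuple $(\vec{v}_1,\dots,\vec{v}_l)\in(\dc)^l$ satisfies $(\vec{v}_t)_j=0$ for $t\ne j$ and $(\vec{v}_j)_j\ne 0$, i.e.\ the first $l$ coordinates of $\vec{v}_j$ form a nonzero multiple of $\vec{e}_j$ (condition (i) of Proposition \ref{main_prop}, up to a scalar). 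For such tuples the remaining factor is $\prod_{i>l}x_{a_i}$ with every $a_i\in X_1$, so the coefficient is a polynomial in $(x_a;a\in X_1)$, homogeneous of degree $n$. This identifies $c_\mu$ as the number of admissible tuples whose multiset of participant-coordinate support-patterns equals $\mu$; in particular $\sum_{i\ge 1}\mu_i$ is the size of the combined support $\bigcup_j\supp(\vec{v}_j)$ among the $n$ participant coordinates.

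Next I would pass from tuples to access groups via Proposition \ref{main_prop}. After rescaling so that $(\vec{v}_j)_j=1$, a tuple counted by $c_\mu$ satisfies conditions (i)--(ii) for the group $B$ equal to its combined participant-support, so $B$ lies in the access structure. For the upper bound I would argue conversely that every access group of size $m$ is realized as the combined support of at least one admissible tuple (that the family of realizable supports is upward closed), and then send each group to such a tuple; since distinct groups have distinct supports this map is injective and gives $M_{\c}(m)\le\sum_{\sum_{i\ge1}\mu_i=m}c_\mu$. The inequality, rather than equality, reflects that one group may be the support of several tuples and, for $q>2$, that each normalized solution is counted together with its $(q-1)^l$ rescalings.

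Finally, for the equality when $m<\tfrac{3}{2}d^\perp-1$ I would invoke the rigidity behind Corollary \ref{A_n_k}, namely $A(N,\delta)\le 2$ for $N\le\tfrac{3}{2}\delta-1$. Fix a participant set $B$ of size $m$ and restrict to $B$ the subcode of $\dc$ of codewords vanishing on the first $l$ coordinates: a nonzero such word is a nonzero codeword of $\dc$ with all its weight inside $B$, hence of weight $\ge d^\perp$, so this is a length-$m$ code of minimum weight $\ge d^\perp$ and $A(m,d^\perp)\le 2$ forces it to have dimension at most one. This pins the set of admissible $\vec{v}_j$ down to a single coset of a code of size $\le 2$, and I would use it to show that each access group in this range is minimal and is realized by exactly one tuple whose support is all of $B$ (so $t(B)=1$ over $\F_2$), upgrading the injection to a bijection and yielding equality. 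The main obstacle is exactly this rigidity step: converting the $A(N,\delta)\le 2$ estimate into the uniqueness of the realizing tuple and the impossibility of its support being a proper subset of $B$ --- equivalently, that no access group of this size properly contains a smaller one --- together with verifying the upward-closure used in the general bound and checking that the normalization absorbs the scaling factor cleanly (which it does for $q=2$, the case of the nontrivial example).
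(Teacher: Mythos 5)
Your opening analysis of the coefficient extraction is correct and coincides with the paper's first step: the fixed indicators force the pattern $\vec{e}_i$ on coordinates $i\le l$, picking off $\prod_{a\in X_2}x_a$ forces the first $l$ coordinates of $\vec{v}_j$ to be a nonzero multiple of $\vec{e}_j$, and $c_\mu$ counts such tuples by support pattern (your remark about the $(q-1)^l$ rescaling factor for $q>2$ is in fact more careful than the paper, whose proof is written for the binary case). The problems are in the two remaining steps. For the upper bound you assert that every access group of size $m$ is the \emph{exact} combined support of some admissible tuple, justified by claiming the family of realizable supports is upward closed. That claim is false, and the paper's own first example refutes it: for the $[8,4,4]$ extended Hamming code with $l=2$, the realizable support sizes read off from $Z=4x_1^3x_2^3+12x_0^2x_1x_2x_3^2$ are only $4$ and $6$ (for a self-dual code with $l=2$ the combined support size is always even), yet by monotonicity of the access structure every $5$-set containing one of the twelve minimal $4$-sets can recover the secret. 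What is true, and what the paper's proof uses, is the statement for \emph{minimal} access groups: if $B$ is minimal, the support-union of any witnessing tuple is itself an access group contained in $B$, hence equals $B$ by minimality. This is also the only reading under which the theorem's inequality holds at all ($M_{\c}(m)$ must count minimal, or exactly realized, groups — your counterexample-prone step is precisely where the literal statement breaks).

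For the equality when $m<\frac{3}{2}d^{\perp}-1$ you invoke $A(N,\delta)\le 2$ to bound the dimension of the code of words of $\dc$ vanishing on the first $l$ coordinates and supported in $B$, but you explicitly defer "the main obstacle": converting this into uniqueness of the realizing tuple. That conversion is the actual content of the paper's proof, and it is done by a direct computation rather than a dimension count: if two distinct tuples realize the same group, their difference gives a nonzero $\vec{c}\in\dc$ with $\supp(\vec{c})\subseteq\cup_j\supp(\vec{v}_j)\cap\{l+1,\dots,l+n\}$; inclusion--exclusion inside the $(m+1)$-set $\{j\}\cup B$ gives $|\supp(\vec{c})\cap\supp(\vec{v}_j)|\ge \wt(\vec{c})+\wt(\vec{v}_j)-1-m$, whence over $\F_2$ one gets $d^{\perp}\le\wt(\vec{c}+\vec{v}_j)\le 1+m-(\wt(\vec{c})+\wt(\vec{v}_j)-1-m)\le 2m+2-2d^{\perp}$, i.e. $m\ge\frac{3}{2}d^{\perp}-1$. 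Note that your coset argument does not by itself exclude the nontrivial coset element $\vec{v}_j+\vec{c}$: the restriction of $\vec{v}_j$ to $B$ has weight only $\ge d^{\perp}-1$, so a blunt application of $A(m,\cdot)$ falls just short, and one needs exactly the bookkeeping above that tracks the extra unit coordinate. In summary, your outline starts correctly, but its two essential steps are missing, and the justification offered for the first of them (upward closure of realizable supports) is genuinely false and must be replaced by the minimality argument.
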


\begin{proof}
 The sum of the coefficients $c_{\mu}$, where $\sum_{i=1}^{2^l-1} \mu_i =m$, equals the number of tuples $(\vec{v}_1,\dots,\vec{v}_l)$ of
 elements of $\c^{\perp}$ such that the projection of $\vec{v}_j$ onto the first $l$ coordinates is the $j$th unit vector in $\F_2^l$,
 and
 $$
   |\cup_{j=1}^l \supp(\vec{v}_j) \cap \{l+1,\dots,l+n\}| = m.
 $$
 Hence due to Proposition \ref{main_prop}, every such tuple determines a group in the access structure of the scheme based on
 $\c$, and every minimal access group occurs as a union of supports of such a tuple.  However, in general there may be different tuples of codewords that correspond to the same access group.
 In this situation, there exists a tuple $(\vec{v}_1,\dots,\vec{v}_l)$ as above and an element $\vec{c} \in \c^{\perp}$ such that
 $$
   \supp(\vec{c}) \subseteq \cup_{j=1}^l \supp(\vec{v}_j) \cap \{l+1,\dots,l+n\}.
 $$
 Then for any $j \in \{1,\dots,l\}$, $|\supp(\vec{c}) \cap \supp(\vec{v}_j) \cap \{l+1,\dots,l+n\}| \ge \wt(\vec{c})+\wt(\vec{v}_j)-1-m$ and hence
 \begin{align*}
  d^{\perp}     &\le \wt(\vec{c}+\vec{v}_j)\\
                &\le  1+m-(\wt(\vec{c})+\wt(\vec{v}_j)-1 -m )\\
                &\le 2m+2-2d^{\perp},
 \end{align*}
 which yields $m \ge \frac{3}{2}d^{\perp} -1$. Hence if $m < \frac{3}{2}d^{\perp} -1$ then
 the sum of the coefficients $c_{\mu}$ with $\sum_{i=1}^{2^l-1} \mu_i =m$ equals the number of access groups of size $m$.
\end{proof}

\medskip

If $\c$ is self-orthogonal then there exists a weaker condition 
than the one in Theorem \ref{main_theorem} under which the number of access groups of size $m$
can be read off from the $2l$-fold joint weight enumerator.
To state this condition, we need the notion of the {\em code extension enumerator} below.

\begin{definition}
 Let $D$ be a linear self-orthogonal $[N,k,d]$ code. The {\em code extension enumerator}
 is the complex polynomial
 $$
  P_D(t)=\sum_{c} t^{d(\langle c,D \rangle)},
 $$
 where the sum is over a system of representatives of $D^{\perp}/D$.
\end{definition}

Clearly $\deg(P_D) \le d$, and a summand $t^{d'}$ in $P_D(t)$ gives rise to a linear
self-orthogonal $[N,k+1,d']$ code.

Now consider a secret sharing scheme based on a binary self-orthogonal linear code $\c$  and let $(\vec{v}_1,\dots,\vec{v}_l)$ be a tuple of
elements of $\c^{\perp}$ giving rise to an access group of size $m$, as in Proposition \ref{main_prop}.
Let $\mathcal{D}$ be the linear code generated by the $\vec{v}_j$, where the columns where all the $\vec{v}_j$ are zero are deleted.
Then $\mathcal{D}$ is a self-orthogonal $[l+m,l]$ code of minimum distance at least $d^{\perp}$.

Assume that there exists another tuple of elements of $\c^{\perp}$ leading to the same access group,
i.e. in Theorem \ref{main_theorem}, we have strict inequality for $M_{\c}(m)$.
Then there exists a nonzero element $\vec{c} \in \c^{\perp}$
with $\supp(\vec{c}) \subseteq \cup_{j=1}^l \supp(\vec{v}_j) \cap \{l+1,\dots,l+n\}$.
Let $(\vec{c})' \in \F_2^{l+m}$ be obtained from $\vec{c}$ by deleting
the coordinates where all the $\vec{v}_i$ are zero.
Then $\langle (\vec{c})',\mathcal{D} \rangle$ has minimum weight at least $d^{\perp}$,
hence gives rise to a summand $t^{d(\langle (\vec{c})',\mathcal{D}\rangle)}$ in $P_{\mathcal{D}}(t)$,
where $d(\langle (\vec{c})',\mathcal{D}\rangle) \ge d^{\perp}$. This yields

\begin{corollary}\label{cond_orth}
  Consider a secret sharing scheme based on a self-orthogonal linear code $\c$
 and let $\mathcal{T}$ be the set of all tuples in $\c^{\perp}$ that give rise to
 an access group of size $m$ (cf. Proposition \ref{main_prop}).
 For a tuple $(\vec{v_1},\dots,\vec{v_l}) \in \mathcal{T}$, let $\mathcal{D}(\vec{v_1},\dots,\vec{v_l})$
 be the code generated by the $\vec{v_j}$, in which the columns where all the $\vec{v_j}$ are zero are deleted.
  If for all such tuples, all monomials in $P_{\mathcal{D}(\vec{v}_1,\dots,\vec{v}_l)}(t)$
  (except for the monomial corresponding to $0 \in \mathcal{D}^{\perp}/\mathcal{D}$) have degree
  less than $d^{\perp}$ then equality holds in Theorem \ref{main_theorem}, i.e. the number
  of groups of size $m$ in the access structure of the scheme based on $\c$ can be read off from
  $\mathcal{J}_{1_{T_1},\ldots,1_{T_l},\dc,\ldots,\dc}$.
\end{corollary}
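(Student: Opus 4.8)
The plan is to argue by contraposition, essentially repackaging the discussion that precedes the corollary into a single clean deduction. Suppose equality fails in Theorem \ref{main_theorem} for the size $m$; that is, $M_{\c}(m)$ is strictly smaller than $\sum_{\mu} c_{\mu}$ over $\mu$ with $\sum_{i=1}^{2^l-1}\mu_i=m$. As established in the proof of Theorem \ref{main_theorem}, this forces two distinct tuples in $\mathcal{T}$ to yield the same access group, so there is a tuple $(\vec{v}_1,\dots,\vec{v}_l)\in\mathcal{T}$ together with a nonzero $\vec{c}\in\c^{\perp}$ satisfying $\supp(\vec{c})\subseteq \cup_{j=1}^l\supp(\vec{v}_j)\cap\{l+1,\dots,l+n\}$. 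My goal is to exhibit, for this particular tuple, a monomial of degree at least $d^{\perp}$ in $P_{\mathcal{D}(\vec{v}_1,\dots,\vec{v}_l)}(t)$ that comes from a \emph{nonzero} coset, contradicting the standing hypothesis.

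First I would form $(\vec{c})'\in\F_2^{l+m}$ by deleting the coordinates on which all the $\vec{v}_j$ vanish, and likewise regard $\mathcal{D}:=\mathcal{D}(\vec{v}_1,\dots,\vec{v}_l)$ as a code in $\F_2^{l+m}$. The first checkpoint is to confirm that $(\vec{c})'$ represents a genuinely nonzero coset of $\mathcal{D}^{\perp}/\mathcal{D}$, i.e. $(\vec{c})'\notin\mathcal{D}$. This is the crucial point, since the hypothesis only controls the monomials other than the one attached to $0\in\mathcal{D}^{\perp}/\mathcal{D}$ (the code $\mathcal{D}$ itself is allowed to have minimum distance at least $d^{\perp}$). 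The verification is short: every generator $(\vec{v}_j)'$ carries $\vec{e}_j$ in the first $l$ positions, so a linear combination of the generators vanishes on those positions only when all its coefficients are zero; hence the only element of $\mathcal{D}$ with zero first $l$ coordinates is $0$. But $(\vec{c})'$ has zero first $l$ coordinates and is itself nonzero, because $\vec{c}\neq\vec{0}$ and $\supp(\vec{c})$ lies entirely among the retained coordinates. Therefore $(\vec{c})'\notin\mathcal{D}$.

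Next I would check that $\langle(\vec{c})',\mathcal{D}\rangle$ has minimum weight at least $d^{\perp}$, so that its contribution $t^{d(\langle(\vec{c})',\mathcal{D}\rangle)}$ to $P_{\mathcal{D}}(t)$ indeed has degree at least $d^{\perp}$. Every codeword of $\langle(\vec{c})',\mathcal{D}\rangle$ is the restriction of an element $\vec{w}\in\c^{\perp}$ that is a combination of $\vec{c}$ and the $\vec{v}_j$; since $\vec{c}$ and all the $\vec{v}_j$ vanish on the deleted columns (for $\vec{c}$ this uses $\supp(\vec{c})\subseteq\cup_j\supp(\vec{v}_j)$), the restriction preserves Hamming weight, so a nonzero codeword of the extension lifts to a nonzero $\vec{w}\in\c^{\perp}$ of the same weight, which is at least $d^{\perp}$. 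Combining the two checkpoints, the nonzero coset of $(\vec{c})'$ produces a monomial of degree at least $d^{\perp}$ in $P_{\mathcal{D}}(t)$, contradicting the hypothesis. Hence no duplicate tuples exist and equality holds in Theorem \ref{main_theorem}.

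The main obstacle I anticipate is purely the bookkeeping around the two separate deletions — removing the first $l$ coordinates versus removing the all-zero columns — and ensuring both that Hamming weight is preserved under restriction for the relevant vectors and that $(\vec{c})'$ lands outside $\mathcal{D}$. Once these are pinned down, the corollary is a direct contrapositive of the preceding paragraph: self-orthogonality of $\c$ is exactly what guarantees that $\mathcal{D}$ is self-orthogonal and hence that $P_{\mathcal{D}}(t)$ is defined at all.
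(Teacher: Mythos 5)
Your proof takes essentially the same route as the paper's: the paper derives the corollary by exactly this contrapositive, assuming two tuples give the same access group, producing a nonzero $\vec{c}\in\c^{\perp}$ supported inside $\cup_{j}\supp(\vec{v}_j)\cap\{l+1,\dots,l+n\}$, restricting it to $(\vec{c})'$, and observing that $\langle(\vec{c})',\mathcal{D}\rangle$ contributes a monomial of degree at least $d^{\perp}$ to $P_{\mathcal{D}}(t)$, contradicting the hypothesis. Your two checkpoints --- that $(\vec{c})'\notin\mathcal{D}$ (so the coset is genuinely nonzero) and that the restriction preserves Hamming weight (so the minimum weight of the extension is at least $d^{\perp}$) --- are left implicit in the paper, so your writeup is, if anything, slightly more complete than the original.
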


\section{Binary self-dual codes}
In this section, we focus on schemes based on binary self-dual codes and the case $l=2$.  Based on the previous section, we use $\mathcal{J}_{1_{T_1},1_{T_2},\c,\c}(x_0,\ldots,x_{15})$ and determine the coefficient of $x_{10}x_5$.  Let us denote this coefficient by $Z$.  Under some conditions, we can determine $Z$ using the biweight enumerator of $\c$.
\begin{prop}\label{transitive}
Let $\c$ be an $[n,k,d]$ binary self-dual code. If $\c$ has a 2-transitive automorphism group then
\begin{align*}
Z &= \dfrac{1}{n(n-1)}\dfrac{\partial^2}{\partial x_2\partial x_3}\mathcal{J}_{\c,\c}(x_0,x_1,x_2,x_3)\\
  &= \dfrac{1}{n(n-1)}\dfrac{\partial^2}{\partial x_3\partial x_2}\mathcal{J}_{\c,\c}(x_0,x_1,x_2,x_3). 
\end{align*}
\end{prop}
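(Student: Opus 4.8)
The plan is to exploit the $2$-transitivity of $\mathrm{Aut}(\c)$ to average $Z$ over all ordered pairs of coordinates and so identify it with a mixed second derivative of the biweight enumerator. Since $\c$ is self-dual we have $\dc=\c$, so the $4$-fold enumerator is $\mathcal{J}_{1_{T_1},1_{T_2},\c,\c}$, and by the interpretation of the coefficient of $x_{10}x_5$ recalled above, $Z$ is a polynomial in $x_0,x_1,x_2,x_3$ whose monomials record, coordinate by coordinate on $\{3,\dots,n\}$, the joint support type of a pair $(\vec{u},\vec{v})\in\c\times\c$ with $u_1\neq 0$, $v_1=0$, $u_2=0$, $v_2\neq 0$. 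The variables $x_0,x_1,x_2,x_3$ index the four joint types (both zero; only $\vec{v}$ nonzero; only $\vec{u}$ nonzero; both nonzero) exactly as in $\mathcal{J}_{\c,\c}$, so $Z$ is a \emph{coordinate-localised} biweight enumerator in which coordinates $1$ and $2$ are forced to carry the two mixed types $(1,0)$ and $(0,1)$ and are then deleted.

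First I would localise at an arbitrary ordered pair. For distinct coordinates $p,q$ let $Z_{p,q}$ be defined just as $Z$ but with $p,q$ in the roles of $1,2$, so that $Z_{1,2}=Z$. Given any automorphism $\sigma$ of $\c$ with $\sigma(1)=p$ and $\sigma(2)=q$, the map $(\vec{u},\vec{v})\mapsto(\sigma\vec{u},\sigma\vec{v})$ is a bijection of $\c\times\c$ that carries the support constraints at $1,2$ to those at $p,q$ and merely permutes the remaining coordinates; since a coordinate permutation preserves the full list of joint-type counts, it preserves each monomial, whence $Z_{p,q}=Z$ as polynomials. By $2$-transitivity such a $\sigma$ exists for every ordered pair of distinct coordinates, so summing over all of them gives
$$
  n(n-1)\,Z=\sum_{p\neq q} Z_{p,q}.
$$

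Next I would evaluate the right-hand side by interchanging the order of summation. A fixed pair $(\vec{u},\vec{v})$ contributes to $Z_{p,q}$ exactly when $p$ is one of the $n_{10}$ coordinates of type $(1,0)$ and $q$ one of the $n_{01}$ coordinates of type $(0,1)$ (these two sets are disjoint, so automatically $p\neq q$), and deleting such a $p,q$ lowers the exponents of the two mixed-type variables by one while leaving the rest of the monomial unchanged. Hence
$$
  \sum_{p\neq q} Z_{p,q}
   =\sum_{(\vec{u},\vec{v})\in\c\times\c} n_{10}\,n_{01}\,
    x_0^{\,n_{00}}x_1^{\,n_{01}-1}x_2^{\,n_{10}-1}x_3^{\,n_{11}},
$$
where $n_{00},n_{01},n_{10},n_{11}$ are the joint-type counts of $(\vec{u},\vec{v})$. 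This is precisely the mixed second partial derivative of $\mathcal{J}_{\c,\c}$ with respect to the two variables indexing the mixed types. Dividing by $n(n-1)$ gives the asserted formula, the equality of the two displayed derivatives being merely the commutativity of differentiation.

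The step I expect to be the crux is the polynomial invariance $Z_{p,q}=Z$: one must use not just that $\mathrm{Aut}(\c)$ is transitive on ordered pairs, but that a coordinate permutation preserves the \emph{entire} joint-support type on the complementary coordinates, so that the whole monomial — and not merely its total degree — is invariant. Once this is in hand the averaging identity is immediate, and the remaining bookkeeping identifying $\sum_{p\neq q}Z_{p,q}$ with a mixed second derivative of $\mathcal{J}_{\c,\c}$ is routine. Note that, unlike in Theorem \ref{main_theorem}, no hypothesis of the form $m<\tfrac{3}{2}d^{\perp}-1$ is needed, since $Z$ counts pairs of codewords rather than access groups and the identity is a purely combinatorial statement about $\c\times\c$.
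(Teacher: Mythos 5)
Correct, and in substance the same argument as the paper's own proof: both use 2-transitivity to show that, of the $A_{i,j,k,l}$ pairs of codewords behind each monomial of $\mathcal{J}_{\c,\c}$, exactly $\frac{jk}{n(n-1)}A_{i,j,k,l}$ have the $10$ pattern at the first distinguished position and the $01$ pattern at the second --- the paper via a two-stage double count (its sets $N_h$ and then $N'_h$, using plain transitivity for the first position and 2-transitivity for the second), you via transporting the localized coefficients $Z_{p,q}$ along automorphisms and then double counting once over ordered pairs of positions, which is a repackaging of the same idea rather than a different method. One remark: like the paper's own proof and its Hamming-code example, your argument establishes the identity with $\frac{\partial^2}{\partial x_1 \partial x_2}$, the derivatives taken in the two mixed-pattern variables ($x_1 \leftrightarrow 01$, $x_2 \leftrightarrow 10$ in the paper's labeling of $\mathcal{J}_{\c,\c}(x_0,x_1,x_2,x_3)$); the ``$\partial x_2 \partial x_3$'' appearing in the displayed statement is inconsistent with that labeling and with the worked examples, so this discrepancy is an indexing slip in the statement, not a gap in your proof.
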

\begin{proof}
The first part of the proof is taken from \cite{H79}.  We can write the biweight enumerator as $$\mathcal{J}_{\c,\c}(x_0,x_1,x_2,x_3)=\sum A_{i,j,k,l}x_0^ix_1^jx_2^kx_3^l$$ where $A_{i,j,k,l}$ is the number of pairs of codewords with $n_{00}=i,n_{01}=j,n_{10}=k,n_{11}=l$.  For a given coefficient $A_{i,j,k,l}$ and coordinate position $h$, let $N_h(i,j,k,l)$ be the set of all pairs of codewords in $\c$ which contribute to $A_{i,j,k,l}$ and with 01 pattern at $h$.  It follows that $\ds\sum_{h=1}^n|N_h(i,j,k,l)|=jA_{i,j,k,l}$ since any pair in $N_h$ has $j$ positions with the 01 pattern.  Since the automorphism group is transitive then $|N_h(i,j,k,l)|$ is independent of $h$.  Thus, $|N_h(i,j,k,l)|=\frac{j}{n}A_{i,j,k,l}$ and in particular, $|N_2(i,j,k,l)|=\frac{j}{n}A_{i,j,k,l}$.

Let $N'_h(i,j,k,l)$ be the set of all pairs of codewords in $N_2(i,j,k,l)$ with 10 pattern at position $h$.  Using the arguments above and since the automorphism group is 2-transitive, then $|N'_h(i,j,k,l)|$ is independent of $h$ and 	
\begin{align*}
|N'_h(i,j,k,l)|&=\frac{k}{n-1}|N_2(i,j,k,l)|\\
               &=\dfrac{kj}{n(n-1)}A_{i,j,k,l}.
\end{align*}  The proposition now follows. 
\end{proof} 

Since the following examples deal with self-dual codes, we shall remark the following.

\begin{prop}
For a secret sharing scheme with $l=2$ based on a self-dual binary code $\c$, the size of every minimal group in the access structure is even.
\end{prop}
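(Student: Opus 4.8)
The plan is to translate the statement about group sizes into a parity computation on the pair of codewords that witness the access group via Proposition \ref{main_prop}, exploiting the two defining features of a binary self-dual code: every codeword has even weight, and any two codewords are orthogonal.

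First I would fix a minimal access group $B=\{P_{i_1},\ldots,P_{i_m}\}$ and apply Proposition \ref{main_prop}, noting that $\dc=\c$ since $\c$ is self-dual. This yields codewords $\vec{v}_1,\vec{v}_2\in\c$ whose restrictions to the first $l=2$ coordinates are $\vec{e}_1=(1,0)$ and $\vec{e}_2=(0,1)$ respectively, and whose supports lie in $\{j\}$ together with the participant coordinates of $B$. Let $\vec{v}_1'$ and $\vec{v}_2'$ denote the restrictions of $\vec{v}_1,\vec{v}_2$ to the participant coordinates. The key reduction is the identity $m=|\supp(\vec{v}_1')\cup\supp(\vec{v}_2')|$: the union $S$ of these two participant supports is contained in $B$ and is itself an access group, witnessed by the same pair $\vec{v}_1,\vec{v}_2$ (their supports still satisfy condition (ii) of Proposition \ref{main_prop} with $B$ replaced by $S$), so minimality of $B$ forces $S=B$.

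Then I would carry out the parity bookkeeping over $\F_2$. Because $\c$ is self-orthogonal, $\wt(\vec{v}_j)=\vec{v}_j\cdot\vec{v}_j\equiv 0\pmod 2$; since $\vec{v}_1$ and $\vec{v}_2$ each contribute weight exactly $1$ on the first two coordinates, the identities $\wt(\vec{v}_1)=1+\wt(\vec{v}_1')$ and $\wt(\vec{v}_2)=1+\wt(\vec{v}_2')$ show that both $\wt(\vec{v}_1')$ and $\wt(\vec{v}_2')$ are odd. Orthogonality gives $\vec{v}_1\cdot\vec{v}_2=0$, and because $\vec{v}_1,\vec{v}_2$ have disjoint support on the first two coordinates, $|\supp(\vec{v}_1')\cap\supp(\vec{v}_2')|=\vec{v}_1'\cdot\vec{v}_2'\equiv 0\pmod 2$. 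Inclusion–exclusion then yields $m=\wt(\vec{v}_1')+\wt(\vec{v}_2')-|\supp(\vec{v}_1')\cap\supp(\vec{v}_2')|$, which is $(\text{odd})+(\text{odd})-(\text{even})$ and hence even.

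The only delicate point is the reduction $m=|\supp(\vec{v}_1')\cup\supp(\vec{v}_2')|$, i.e.\ that minimality of $B$ exactly matches the union of the two witnessing supports, so that no participant coordinate outside those supports can be forced into $B$; I would argue this carefully from the minimality hypothesis as above. Everything after that step is elementary parity arithmetic, so I expect the write-up to be short once the support identity is pinned down.
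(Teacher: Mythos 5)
Your proposal is correct and follows essentially the same route as the paper: both reduce the group size to $\wt(\vec{v}_1)-1+\wt(\vec{v}_2)-1-|\supp(\vec{v}_1)\cap\supp(\vec{v}_2)|$ and conclude evenness from self-duality (even weights) and orthogonality (even support intersection). The only difference is that you spell out, via minimality, why $m$ equals the size of the union of the participant supports, a step the paper asserts without detailed justification; this is a welcome bit of extra care but not a different argument.
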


\begin{proof}
 A minimal access group of size $m$ in the access structure corresponds to a pair $(\vec{v}_1,\vec{v}_2)$ of words in $\c^{\perp}=\c$
 such that $\vec{v}_1=(1\;0\; \dots)$ and $\vec{v}_2=(0\;1\;\dots)$ and $m=|(\supp(\vec{v}_1) \cup \supp(\vec{v}_2)) - \{1,2\}|$.
 The latter equals
 $
   \wt(\vec{v}_1)-1 + \wt(\vec{v}_2)-1 -|\supp(\vec{v}_1) \cap \supp(\vec{v}_2)|.
 $
 Since $\c$ is self-dual, the weight of every word in $\c$ is even.
 Moreover, the parity of $|\supp(\vec{v}_1) \cap \supp(\vec{v}_2)|$ equals the inner product of $\vec{v}_1$ with $\vec{v}_2$,
 hence is zero as well.
 Hence $m$ is even.
\end{proof}

\begin{example}
The automorphism group of the $[8,4,4]$ extended Hamming code is 2-transitive and its biweight enumerator is 
\begin{align*}
& \mathcal{J}_{\c,\c}(x_0,x_1,x_2,x_3) = x_3^8+14x_2^4x_3^4+x_2^8+\\
& 14x_3^4x_1^4+14x_2^4x_1^4+x_1^8+168x_0^2x_1^2x_2^2x_3^2+14x_3^4x_0^4+\\
& 14x_2^4x_0^4+14x_1^4x_0^4+x_0^8.
\end{align*}
We obtain $Z=4x_1^3x_2^3 + 12 x_0^2x_1x_2x_3^2$. When $l=2$, the total number of participants is 6.
Since $\frac{3}{2}d^{\perp}-1=5$, we can read off the number of access groups of size 4 as 12.
The only other access group is the one formed by all participants.
\end{example}

\begin{example}
The biweight enumerator of the $[24,12,8]$ Golay code $g_{24}$ was computed in \cite{MMS72} and it is known that the automorphism group of this code is 5-transitive.  Applying the proposition above, we obtain
\begin{align*}
& Z = 6160x_0^{12}x_1^3x_2^3x_3^4 + 22176x_0^{10}x_1^5x_2^5x_3^2+\\
& 7392x_0^{10}x_1^5x_2x_3^6 + 7392x_0^{10}x_1x_2^5x_3^6+\\
& 2640x_0^8x_1^7x_2^7 + 73920x_0^8x_1^7x_2^3x_3^4+\\
& 73920x_0^8x_1^3x_2^7x_3^4 + 36960x_0^8x_1^3x_2^3x_3^8+\\
& 36960x_0^6x_1^9x_2^5x_3^2 + 12320x_0^6x_1^9x_2x_3^6+\\
& 36960x_0^6x_1^5x_2^9x_3^2 + 266112x_0^6x_1^5x_2^5x_3^6+\\
& 7392x_0^6x_1^5x_2x_3^{10} + 12320x_0^6x_1x_2^9x_3^6+\\
& 7392x_0^6x_1x_2^5x_3^{10} + 18480x_0^4x_1^{11}x_2^3x_3^4+\\
& 147840x_0^4x_1^7x_2^7x_3^4 + 73920x_0^4x_1^7x_2^3x_3^8+\\
& 18480x_0^4x_1^3x_2^{11}x_3^4 + 73920x_0^4x_1^3x_2^7x_3^8+\\
& 6160x_0^4x_1^3x_2^3x_3^{12} + 36960x_0^2x_1^9x_2^5x_3^6+\\
& 36960x_0^2x_1^5x_2^9x_3^6 + 22176x_0^2x_1^5x_2^5x_3^{10} + 176x_1^{15}x_2^7+\\
& 672x_1^{11}x_2^{11} + 176x_1^7x_2^{15} + 2640x_1^7x_2^7x_3^8.
\end{align*}
For the secret sharing scheme based on $g_{24}$ with secret lenght $l=2$, the number of groups in the access structure
of size $m=10$ can be read off from $Z$ as $6160$ due to Theorem \ref{main_theorem}, since
$10<\frac{3}{2}d^{\perp}-1=11$.
For every tuple $(\vec{v}_1,\vec{v}_2)$ giving rise to an access group of size $m=12$, we can compute
$P_{D(\vec{v}_1,\vec{v}_2)}(t)$ explicitly, using the information on the pairs of codewords that is given by $Z$.
It turns out that in all the cases, all monomials have degree less than 8, hence due to Corollary \ref{cond_orth},
the number of access groups of size $12$ equals $36960$.
\end{example}

\section{Invariant theory}
Suppose $\c$ is an $[n,k,d]$ binary self-dual code.  We shall apply invariant theory in describing the access structure, similar to what was done in \cite{DMS08}.  We consider the case $l=2$.  Thus, we shall look at the 4-fold joint weight enumerator $\mathcal{J}_{1_{T_1},1_{T_2},\c,\c}(x_a)$ where $a\in\F_2^4$.

If all the codewords of $\c$ have weights divisible by 4 then we have a Type II code.  Otherwise, we have a Type I code.  In \cite{MMS72}, it was shown that the biweight enumerator of a Type I code is invariant under the group $G_1$ generated by all permutation matrices, all 16 matrices diag$(\pm 1,\pm 1,\pm 1,\pm 1)$, and 
$$T_1 = \dfrac{1}{\sqrt{2}}\lt(\begin{array}{cccc}
           1 & 1 & 0 & 0\\ 
           1 & -1 & 0 & 0\\
           0 & 0 & 1 & 1\\
           0 & 0 & 1 & -1\\
         \end{array}\rt).$$
The biweight enumerator of a Type II code is invariant under the group $G_2$ generated by $G_1$ and $T_2 = \text{diag}(1,i,1,i)$ \cite{H79}.

Let $G$ stand for $G_1$ or $G_2$ depending on the type of code we are dealing with.  Following the arguments in \cite{H79} and \cite[Section III]{MMS72}, and using the MacWilliams theorem in \cite{DHO01}, we can verify that $\mathcal{J}_{1_{T_1},1_{T_2},\c,\c}(x_a)$ is left invariant by every element of $G$ acting simultaneously on the following sets of variables:
\begin{eqnarray*}
V_1 &= \{x_0, x_1, x_2, x_3\}\\
V_2 &= \{x_4, x_5, x_6, x_7\}\\
V_3 &= \{x_8, x_9, x_{10}, x_{11}\}\\
V_4 &= \{x_{12}, x_{13}, x_{14}, x_{15}\}.
\end{eqnarray*}
Hence, $\mathcal{J}_{1_{T_1},1_{T_2},\c,\c}(x_a)$ is a simultaneous invariant for the diagonal action of $G$.  As a consequence, we can extend the results in \cite{DMS08} regarding the Molien series.  Note that the exponents of the variables in $V_4$ are always zero, hence we can just consider the remaining three sets.  The vector space of invariants that we are going to use is $\C[x_a]^G_{i,j,k}$ where $x_a\in \F_2^4\setminus V_4$ and $i,j,k$ are the total degrees of the variables in $V_1,V_2,V_3$ respectively.
The corresponding generalized Molien series \cite{St79} is given by
\begin{align*}
\Phi_G(r,s,t) &= \ds\sum_{i=0}^\infty\sum_{j=0}^\infty\sum_{k=0}^\infty\text{dim}(\C[x_a]^G_{i,j,k})\\ 
              &= \dfrac{1}{|G|}\ds\sum_{g\in G} \dfrac{1}{\text{det}(I-rg)\text{det}(I-sg)\text{det}(I-tg)}. 
\end{align*}
Based on the previous section, we are interested in $\text{dim}(\C[x_a]^G_{r,1,1})$.  Its generating function in the variable $r$ is given by $$F_G(r) = \lt.\dfrac{\partial}{\partial s\partial t}\Phi_G(r,s,t)\rt|_{(s,t)=(0,0)}.$$ Using MAGMA \cite{BCP97}, we obtain the following for Type I:
\begin{align*}
F_G(r) &= (r^{20} + r^{16} - 2r^{14} + 2r^{12} + r^{10} + r^{8} - r^6 + 1)\\
       &/(r^{32} - 2r^{30} + 2r^{28} - 4r^{26} + 5r^{24} - 4r^{22} + 6r^{20}\\ 
       &- 6r^{18} + 4r^{16} - 6r^{14} + 6r^{12} - 4r^{10} + 5r^8 - 4r^6\\ 
       &+ 2r^4 - 2r^2 + 1). 
\end{align*}
For Type II we have
\begin{align*}
F_G(r) &= (4r^{62} + 4r^{54} + 5r^{46} + 6r^{38} + 7r^{30}\\ 
       &+ 3r^{22} + 2r^{14} + r^6)\\
       &/(r^{96} - r^{88} - 2r^{72} + 2r^{64} - r^{56} + 2r^{48}\\ 
       &- r^{40} + 2r^{32} - 2r^{24} - r^8 + 1). 
\end{align*}

\section{Conclusion}
We discuss an extension of Massey secret sharing scheme and analyze the access structure using the
dual code and the $g$-fold joint weight enumerator.  It would be worthwhile to replace symmetry properties (group transitivity) by regularity properties (combinatorial designs) in Prop. \ref{transitive}.  Note that for the scheme based on the extended Golay code, we were only able to give a partial description of the access structure.  For future work, we consider the complete description of the access structure.  Another interesting problem is to determine the access structure of schemes based on other families of codes.

\section*{Acknowledgments}
The work of R. dela Cruz was supported by the NTU PhD Research Scholarship and the Merlion PhD Grant.  He would like to thank the hospitality of CNRS I3S Sophia Antipolis and Telecom-ParisTech.  He is also on study leave from the Institute of Mathematics, University of the Philippines Diliman.
 
\medskip

\end{document}